\let\chap\S
\DeclareMathOperator*{\argmax}{arg\,max}
\newcommand{\X}{\mathcal{X}}
\newcommand{\Y}{\mathcal{Y}}
\newcommand{\Z}{\mathcal{Z}}
\newcommand{\Rho}{\mathcal{R}}
\renewcommand{\L}{\mathcal{L}}
\renewcommand{\H}{\mathcal{H}}
\newcommand{\N}{\mathcal{N}}
\newcommand{\F}{\mathcal{F}}
\renewcommand{\S}{\mathcal{S}}
\renewcommand{\P}{\mathcal{P}}
\newtheorem{theorem}{\bf Theorem}
\newtheorem{proposition}{\bf Proposition}
\newtheorem{corollary}{\bf Corollary}
\newtheorem{definition}{\bf Definition}
\begin{document}

\preprint{APS/123-QED}

\title{Optimal Universal Quantum Encoding for Statistical Inference}
%\thanks{}%

\author{Farhad Farokhi}
 \email{farhad.farokhi@unimelb.edu.au}
\affiliation{Department of Electrical and Electronic Engineering, The University of Melbourne, Parkville, VIC 3010, Australia
}%

\date{\today}% It is always \today, today,
             %  but any date may be explicitly specified

\begin{abstract}
Optimal encoding of classical data for statistical inference using quantum computing is investigated. A universal encoder is sought that is optimal for a wide array of statistical inference tasks. Accuracy of any statistical inference is shown to be upper bounded by a term that is proportional to maximal quantum leakage from the classical data, i.e., the input to the inference model, through its quantum encoding. This demonstrates that the maximal quantum leakage is a universal measure of the quality of the encoding strategy for statistical inference as it only depends on the quantum encoding of the data and not the inference task itself. The optimal universal encoding strategy, i.e., the encoding strategy that maximizes the maximal quantum leakage, is proved to be attained by pure states. When there are enough qubits, basis encoding is proved to be universally optimal. An iterative method for numerically computing the optimal universal encoding strategy is presented. 
\end{abstract}

%\keywords{Suggested keywords}%Use showkeys class option if keyword
                              %display desired
\maketitle

%\tableofcontents

\section{Introduction}
Encoding classical data, e.g., traditional datasets, into quantum systems often forms the first step in quantum computing and communication. In quantum machine learning~\cite{PhysRevA103032430}, data needs to be encoded into quantum states so that it can be fed into parameterized quantum circuits for predictions. In quantum key distribution~\cite{bennett2014quantum}, random bits must be encoded in non-orthogonal quantum states for communication. In quantum communication~\cite{haselgrove2005optimal}, messages are encoded by appropriate code-words to ensure high fidelity over noisy channels.

Not all encoding strategies are equal. There are many methods for encoding classical data into quantum systems, such as basis encoding, angle encoding, quantum random access memory encoding, and amplitude encoding~\cite{10555535110653511068, weigold2021encoding}. There have been many attempts at systematically comparing encoding strategies and establishing optimal ones. The effect of commonly used data-encoding mechanisms on the expressivity of quantum machine learning models was explored in~\cite{PhysRevA103032430}. Optimal encoding to achieve maximal fidelity in communication over spin systems was derived in~\cite{haselgrove2005optimal}. Optimal encoding and retrieval of classical data in quantum domain was developed in~\cite{elron2007optimal}. Quantum information theory, especially relative entropy, was used to analyze quantum encoding over a  family of communication channels~\cite{korzekwa2022encoding}. Incompatibility of states was used to understand security of various encoding strategies in quantum key sharing mechanisms~\cite{mitra2021optimal}. This problem was approached again using information leakage with gentle measurements in~\cite{farokhi2024measuring}. However, there has been no information-theoretic approach to analyze quantum encoding strategies in statistical inference and to develop universal encoding strategies. Universality here refers to encoders that are optimal for a wide array of statistical inference tasks as opposed to problem-specific encoders. This is the topic of the current letter.

In this letter, we focus on information-theoretical analysis of statistical inference using quantum computing. We consider an inference model to guess or estimate the output $Z$ based on access to input $X$. Random variables $X$ and $Z$ are assumed to be jointly distributed. For instance, in image recognition, the input $X$ is an image or its pixelated representation while the output $Z$ is the category to which the image belongs (e.g., cat or dog) or the content of the image (e.g., location of obstacles). Note that the inference model does not have access to the output $Z$ and must guess the output based on the input $X$ and historical data (often called training dataset).
In this letter, we focus on the large-data regime, where we can assume that the probability distribution of the data is known, i.e., it is accurately estimated based on many available data points. 
The outcome of the inference model is denoted by $\widehat{Z}$. In statistical inference, the quality of an inference model is synonymous with its accuracy $\mathbb{P}\{\widehat{Z}=Z\}$. For statistical inference using a quantum-computing procedure, as the first step, the classical data $X$ must be encoded into a quantum state. The quantum state is then processed by arbitrary quantum channels, which can model unitary gates used in quantum computing~\cite{nielsen2010quantum} and more general noisy transformations~\cite{wilde2013quantum}. Subsequently, measurements are taken and post-processed to generate the output estimate $\widehat{Z}$. We prove that, irrespective of the objective of the inference task, the accuracy of the quantum computing procedure is constrained by a term that is proportional with the maximal quantum leakage~\cite{Farokhi_PRA} from the classical data $X$ through its quantum encoding. This demonstrates that the maximal quantum leakage is a universal measure of the quality of the encoding strategy. Interestingly, maximal quantum inference is also independent of the distribution of the input $X$ (and is only a function of the quantum encoding of the classical data). Therefore, the optimal universal encoder, i.e., the encoder that maximizes the maximal quantum leakage, is independent of the inference problem at hand. This feature signifies the universality of the optimal universal encoder.
These observations show that the number of qubits required for statistical inference must be larger than $\log_2(|\X|)/2$, where $\X$ is the support set of discrete input random variable $X$, to not artificially constrain the performance of the inference model. We prove that the optimal universal encoding strategy is composed of pure states. This is an important revelation as often quantum computing procedures rely on pure states. Furthermore, when there are enough qubits, i.e., the number of qubits is higher than $\log_2(|\X|)$, the basis encoding is proved to be the   optimal universal encoder. Finally, an iterative method for numerically computing the optimal universal encoding strategy in all other cases is presented. This procedure relies on subgradient ascent for maximization of the maximal quantum leakage.

\section{Statistical Inference via Quantum Computing}
Consider a statistical inference problem with jointly distributed discrete random variables $X\in\X$, referred to as the input in what follows, and $Z\in\Z$, referred to as the output. The aim is to develop an inference model to guess or estimate the output $Z$ based on only access to $X$ and the distribution of the underlying random variables. The outcome of the inference model is denoted by $\widehat{Z}\in\Z$. The accuracy of the statistical inference model is given by the probability of the event that $\widehat{Z}$ and $Z$ coincide $\mathbb{P}\{\widehat{Z}=Z\}$. In this letter, we use a quantum computing framework for  the statistical inference. This framework is composed of four parts: quantum encoding, processing, measurement, and classical post-processing. In what follows, we explain each part briefly.

\textit{Encoding}: Here, we convert or encode the raw classical data, i.e., the realization of random variable $X$, into a quantum system. For the purpose of this letter, a quantum system is modelled by a finite-dimensional Hilbert space $\H$. The set of positive semi-definite linear operators on $\H$ is denoted by $\P(\H)$ while the set of density operators, i.e., positive semi-definite linear operators on $\H$ with unit trace, is denoted by $\S(\H)$. We use density operators to model the state of a quantum system. The quantum encoding entails preparing quantum system $A$ in state $\rho^x\in\S(\H)$ if the input is $X=x\in\X$ is realized. The ensemble $\Rho:=\{\rho^x\}_{x\in\X}$ denotes the quantum encoding of the classical random variable $X$.

\textit{Processing}: After encoding the data, we must process it using a quantum circuit. We model this by  an arbitrary quantum channel $\N:\S(\H)\rightarrow \S(\H')$. A quantum channel is a %mapping from the space of density operators $\S(\H)$ to potentially another space of density operators $\S(\H)$ that is both 
completely positive and trace preserving mapping~\cite{wilde2013quantum}.

\textit{Measurement}: After processing, we need to extract information from the quantum system $A$ using a positive operator-valued measure (POVM) $\F:=\{F_y\}_{y\in\Y}$, i.e., $F_y\in\P(\H')$ and $\sum_{y\in\Y} F_y=I$. Let random variable $Y\in\Y$ denote the outcome of the measurement. By Born's rule, we know that the probability of observing $Y=y\in\Y$ if the state of the quantum system $A$ is $\rho$ is given by $\mathbb{P}\{Y=y\}=\trace(F_y\rho)$. Therefore, $\mathbb{P}\{Y=y\,|\,X=x\}=\trace(F_y\rho^x)$ for all $x\in\X$ and $y\in\Y$.

\textit{Classical post-processing}: The last step is to process the random variable $Y$ by a classical processing routine to generate random variable $\widehat{Z}\in\Z$. This can be modelled by a conditional probability density function $\mathbb\{\widehat{Z}=z\,|\,Y=y\}=\gamma_{zy}$. Let $\gamma=(\gamma_{zy})_{z\in\Z,y\in\Y}$.

We denote the quantum computing procedure by tuple $(\Rho,\N,\F,\gamma)$. Before analysing the accuracy of the statistical inference model, we need to present maximal quantum leakage. We use this notion of information to bound the accuracy of inference models.

\begin{definition}[Maximal Quantum Leakage~\cite{Farokhi_PRA}] \label{def:qml} Maximal quantum leakage from random variable $X$ through quantum system $A$ is 
\begin{align}\label{eqn:def_qml}
    \mathcal{Q}(X\rightarrow A)_{\rho}
    =&\sup_{\{F_y\}_{y\in\Y}} \log\!\left(\sum_{y\in\Y} \max_{\substack{
         x\in\X: \\
         \mathbb{P}\{X=x\}>0
    }
    }\!\!\trace(\rho^x F_y) \!\right)\!,
\end{align}
where the supremum is taken over all POVMs $\F=\{F_y\}_{y\in\Y}$ with arbitrary finite outcome set $\Y$. 
\end{definition}

Maximal leakage was defined in~\cite{Farokhi_PRA} as a measure of private or secure information leakage to an arbitrary adversary for investigating security of quantum encoding policies (in communication, storage, and data analysis). Interestingly, maximal quantum leakage is independent of the output of the inference problem. It also does not depend of the distribution of the input; it only depends on the support set of the input and not the exact probability values.
In this letter, we show that maximal quantum leakage is  useful for bounding performance of quantum-based statistical inference mechanisms. This is established in the following theorem.

\begin{theorem} \label{tho:stat_inference} The accuracy of any quantum computing procedure $(\Rho,\N,\F,\gamma)$ is constrained as
    \begin{align*}
        \mathbb{P}\{\widehat{Z}=Z\}\leq \mathcal{Q}(X\rightarrow A)_{\rho}\max_{z\in\Z}\mathbb{P}\{Z=z\}.
    \end{align*}
\end{theorem}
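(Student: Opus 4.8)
The plan is to expand the accuracy into an explicit sum, bound it factor-by-factor, and then recognise the surviving quantum quantity as the maximal quantum leakage of the encoding. Write $p_{xz}:=\mathbb{P}\{X=x,Z=z\}$, abbreviate the post-processing kernel $\gamma_{zy}=\mathbb{P}\{\widehat{Z}=z\,|\,Y=y\}$, and let $\sigma^x:=\N(\rho^x)$ be the processed state. Since $\widehat{Z}$ is generated from $X$ (and $Z$) only through the measurement outcome $Y$, the joint law obeys the Markov chain $Z-X-Y-\widehat{Z}$, and Born's rule gives $\mathbb{P}\{Y=y\,|\,X=x\}=\trace(F_y\sigma^x)$. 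Hence
\begin{align*}
\mathbb{P}\{\widehat{Z}=Z\}=\sum_{x,y,z}\gamma_{zy}\,\trace(F_y\sigma^x)\,p_{xz}.
\end{align*}

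First I would eliminate the dependence on $X$ and on the joint distribution. For fixed $(y,z)$, because $p_{xz}\ge 0$ and $p_{xz}=0$ whenever $\mathbb{P}\{X=x\}=0$, I can bound $\sum_x\trace(F_y\sigma^x)\,p_{xz}\le\big(\max_{x:\mathbb{P}\{X=x\}>0}\trace(F_y\sigma^x)\big)\sum_x p_{xz}$, where $\sum_x p_{xz}=\mathbb{P}\{Z=z\}$. Inserting this, using $\mathbb{P}\{Z=z\}\le\max_{z'\in\Z}\mathbb{P}\{Z=z'\}$, and then collapsing the $z$-sum via the normalisation $\sum_z\gamma_{zy}=1$ yields
\begin{align*}
\mathbb{P}\{\widehat{Z}=Z\}\le\Big(\max_{z'\in\Z}\mathbb{P}\{Z=z'\}\Big)\sum_{y}\max_{x:\mathbb{P}\{X=x\}>0}\trace(F_y\sigma^x).
\end{align*}

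The decisive step is to show that the factor $\sum_y\max_x\trace(F_y\sigma^x)$ depends on the encoding $\Rho$ alone, and not on the processing channel $\N$ or the POVM $\F$. I would pass to the Heisenberg picture: writing $\N^\dagger$ for the adjoint of $\N$, one has $\trace(F_y\sigma^x)=\trace(F_y\N(\rho^x))=\trace(\N^\dagger(F_y)\rho^x)$. Setting $\tilde{F}_y:=\N^\dagger(F_y)$, the family $\{\tilde{F}_y\}_y$ is itself a POVM on $\H$: positivity of each $\tilde{F}_y$ follows from (complete) positivity of $\N$, while $\sum_y\tilde{F}_y=\N^\dagger\big(\sum_y F_y\big)=\N^\dagger(I)=I$ because $\N$ is trace preserving, so its adjoint is unital. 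Consequently $\sum_y\max_x\trace(F_y\sigma^x)=\sum_y\max_x\trace(\tilde{F}_y\rho^x)$ has exactly the form appearing inside the logarithm in Definition~\ref{def:qml}, so it is at most $\sup_{\{F_y\}}\sum_y\max_{x:\mathbb{P}\{X=x\}>0}\trace(F_y\rho^x)=2^{\mathcal{Q}(X\rightarrow A)_{\rho}}$. Combining the two displays gives the asserted bound.

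I expect this channel-absorption step to be the main obstacle, since it is precisely where universality originates: the bound must be forced to forget both $\N$ and $\F$. The two points requiring care are the verification that $\{\N^\dagger(F_y)\}$ is a genuine POVM (using trace preservation of $\N$ for unitality and complete positivity for positivity) and the alignment of the restricted maximum over $\{x:\mathbb{P}\{X=x\}>0\}$ with the maximum used in Definition~\ref{def:qml}. The fact that the supremum there ranges over arbitrary finite outcome sets is harmless, since $\{\tilde{F}_y\}$ is a legitimate finite competitor in that supremum.
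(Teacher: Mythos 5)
Your argument is correct step by step and takes a genuinely different, self-contained route. The paper's proof is two citations: it invokes the operational characterization of maximal quantum leakage (Theorem~1 of the reference) to identify $\sup_{\F}\sup_{Z,\widehat{Z}}\mathbb{P}\{\widehat{Z}=Z\}/\max_z\mathbb{P}\{Z=z\}$ with the leakage of the \emph{processed} ensemble $\{\N(\rho^x)\}$, and then applies the data-processing inequality (Proposition~3 there) to pass from $\N(\rho)$ to $\rho$. You instead re-derive both ingredients in one pass: the expansion of the accuracy and the factor-by-factor bound reconstructs the operational characterization, and the Heisenberg-picture step $\trace(F_y\N(\rho^x))=\trace(\N^{\dagger}(F_y)\rho^x)$ with $\{\N^{\dagger}(F_y)\}$ a POVM is exactly the standard proof of the data-processing inequality. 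What your route buys is transparency about where universality comes from (the channel and measurement are absorbed into the supremum defining the leakage); what the paper's route buys is brevity, at the cost of leaning entirely on the cited results.

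One point deserves flagging. Your chain of inequalities terminates at $\mathbb{P}\{\widehat{Z}=Z\}\leq 2^{\mathcal{Q}(X\rightarrow A)_{\rho}}\max_{z\in\Z}\mathbb{P}\{Z=z\}$, since by Definition~\ref{def:qml} the quantity $\sup_{\{F_y\}}\sum_y\max_{x}\trace(F_y\rho^x)$ equals $2^{\mathcal{Q}}$, not $\mathcal{Q}$; because $2^{t}>t$ for all real $t$, this is strictly weaker than the inequality as stated in the theorem, so your closing claim that this ``gives the asserted bound'' does not literally follow. That said, the discrepancy originates in the paper itself: its proof equates the supremum of the probability ratio with $\mathcal{Q}(X\rightarrow A)_{\N(\rho)}$ rather than with $2^{\mathcal{Q}(X\rightarrow A)_{\N(\rho)}}$, which is inconsistent with the logarithm in Definition~\ref{def:qml}. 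Your derivation is the internally consistent one, and if you present it you should either state the conclusion with $2^{\mathcal{Q}}$ or note explicitly that the theorem's multiplicative factor should be read as $2^{\mathcal{Q}}$.
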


\begin{proof}
Note that
\begin{align*}
    \frac{\mathbb{P}\{\widehat{Z}=Z\}}{\displaystyle \max_{z\in\Z}\mathbb{P}\{Z=z\}}
    &\leq 
    \sup_{\{F_y\}_{y\in\Y}}\sup_{Z,\widehat{Z}}\frac{\mathbb{P}\{\widehat{Z}=Z\}}{\displaystyle \max_{z\in\Z}\mathbb{P}\{Z=z\}}\\
    &=\mathcal{Q}(X\rightarrow A)_{\N(\rho)},
\end{align*}
where the equality follows from~\cite[Theorem~1]{Farokhi_PRA}. Subsequently, we have $\mathcal{Q}(X\rightarrow A)_{\N(\rho)}\leq \mathcal{Q}(X\rightarrow A)_{\rho}$~\cite[Proposition~3]{Farokhi_PRA}. This concludes the proof.
\end{proof}

The upper bound in Theorem~\ref{tho:stat_inference} shows that the accuracy of a quantum inference method is upper bounded by the maximal quantum leakage $\mathcal{Q}(X\rightarrow A)_{\rho}$, which is independent of the output and the distribution of the input of the statistical inference model, multiplied by $\max_{z\in\Z}\mathbb{P}\{Z=z\}$, which captures the accuracy of the best\footnote{Maximum \textit{a priori} estimator.} statistical inference algorithm that ignores the realization of the input $X$. The maximal quantum leakage $\mathcal{Q}(X\rightarrow A)_{\rho}$  essentially captures the multiplicative increase in the accuracy of the statistical inference model
by accessing the input $X$ via its quantum encoding $\mathcal{R}$. %Therefore, the maximal quantum leakage $\mathcal{Q}(X\rightarrow A)_{\rho}$ captures the magnitude of the growth in the accuracy of the quantum statistical inference algorithm grows by accessing the input. 

\begin{corollary} \label{cor:stats_inference} The accuracy of any quantum computing procedure $(\Rho,\N,\F,\gamma)$ is constrained as
    \begin{align*}
        \mathbb{P}\{\widehat{Z}=Z\}\leq &\max_{z\in\Z}\mathbb{P}\{Z=z\}\\
        &\times \min\{\log_2(|\X|),2\log_2(\dim(\H))\}.
    \end{align*}
\end{corollary}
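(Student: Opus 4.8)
The plan is to combine Theorem~\ref{tho:stat_inference} with two uniform upper bounds on the maximal quantum leakage $\mathcal{Q}(X\rightarrow A)_{\rho}$ that depend only on the support cardinality $|\X|$ and the Hilbert-space dimension $\dim(\H)$. Since Theorem~\ref{tho:stat_inference} already yields $\mathbb{P}\{\widehat{Z}=Z\}\leq \mathcal{Q}(X\rightarrow A)_{\rho}\max_{z\in\Z}\mathbb{P}\{Z=z\}$, it suffices to prove $\mathcal{Q}(X\rightarrow A)_{\rho}\leq \min\{\log_2(|\X|),2\log_2(\dim(\H))\}$ and substitute; I read the $\log$ in Definition~\ref{def:qml} as $\log_2$, consistent with the corollary.

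First I would establish the data-cardinality bound $\mathcal{Q}(X\rightarrow A)_{\rho}\leq \log_2(|\X|)$. Fix an arbitrary POVM $\{F_y\}_{y\in\Y}$, bound the maximum over the support by the full sum over $x$, and exchange the order of summation:
\begin{align*}
\sum_{y\in\Y}\max_{x}\trace(\rho^x F_y)
\leq \sum_{y\in\Y}\sum_{x\in\X}\trace(\rho^x F_y)
= \sum_{x\in\X}\trace\!\Big(\rho^x \sum_{y\in\Y}F_y\Big)
= |\X|,
\end{align*}
using $\sum_{y\in\Y}F_y=I$ and $\trace(\rho^x)=1$. Taking $\log_2$ and then the supremum over POVMs gives the claim; the bound is uniform in the outcome set $\Y$, so the arbitrary $\Y$ in Definition~\ref{def:qml} causes no difficulty.

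Next I would establish the dimension bound. The cleanest route uses $\trace(\rho^x F_y)\leq \lambda_{\max}(\rho^x)\trace(F_y)\leq \trace(F_y)$, valid because $\rho^x,F_y\succeq 0$ and $\lambda_{\max}(\rho^x)\leq\trace(\rho^x)=1$; summing over $y$ and invoking $\sum_{y\in\Y}F_y=I$ gives $\sum_{y\in\Y}\max_x\trace(\rho^x F_y)\leq \trace(I)=\dim(\H)$, hence $\mathcal{Q}(X\rightarrow A)_{\rho}\leq \log_2(\dim(\H))\leq 2\log_2(\dim(\H))$. To instead recover the stated factor of two by an outcome-counting argument, I would note that the Hermitian operators on $\H$ form a real vector space of dimension $(\dim(\H))^2$, so a Carath\'eodory/extreme-point reduction lets one restrict the supremum in Definition~\ref{def:qml} to POVMs with at most $(\dim(\H))^2$ outcomes, and then bound each term by $\trace(\rho^x F_y)\leq \trace(\rho^x)=1$ to obtain $\sum_{y\in\Y}\max_x\trace(\rho^x F_y)\leq (\dim(\H))^2$.

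Finally I would combine: $\mathcal{Q}(X\rightarrow A)_{\rho}$ is at most each of $\log_2(|\X|)$ and $2\log_2(\dim(\H))$, hence at most their minimum, and multiplying through Theorem~\ref{tho:stat_inference} by $\max_{z\in\Z}\mathbb{P}\{Z=z\}$ delivers the corollary. I expect the dimension bound to be the main obstacle: the trace argument is elementary and in fact yields the sharper constant $\log_2(\dim(\H))$, from which the stated bound follows \emph{a fortiori}, but if one insists on deriving exactly $2\log_2(\dim(\H))$ through outcome counting, the delicate point is justifying the reduction to at most $(\dim(\H))^2$ outcomes uniformly over the unbounded family of admissible POVMs, which requires the Carath\'eodory-type argument in operator space. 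The cardinality bound, by contrast, is immediate.
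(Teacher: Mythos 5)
Your proposal is correct, and at the top level it follows the same skeleton as the paper: multiply the bound of Theorem~\ref{tho:stat_inference} by a uniform upper bound on $\mathcal{Q}(X\rightarrow A)_{\rho}$ depending only on $|\X|$ and $\dim(\H)$. The difference is that the paper disposes of that upper bound in one line by citing \cite[Proposition~2]{Farokhi_PRA} for $\mathcal{Q}(X\rightarrow A)_{\rho}\leq \min\{\log_2(|\X|),\log_2(\dim(\H)^2)\}$, whereas you prove both halves from scratch. Your cardinality bound (replace the maximum over $x$ by the sum, swap the order of summation, and use $\sum_{y}F_y=I$ together with $\trace(\rho^x)=1$) is exactly the standard argument and is airtight, with finiteness of $\Y$ guaranteed by Definition~\ref{def:qml}. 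Your dimension bound via $\trace(\rho^x F_y)\leq \lambda_{\max}(\rho^x)\trace(F_y)\leq \trace(F_y)$ is also correct, and it is worth emphasizing that it yields the strictly sharper constant $\mathcal{Q}(X\rightarrow A)_{\rho}\leq \log_2(\dim(\H))$, from which the stated $2\log_2(\dim(\H))$ follows \emph{a fortiori}; the Carath\'eodory-type reduction to POVMs with at most $(\dim(\H))^2$ outcomes is therefore unnecessary (it is how the factor of two arises in the cited reference, and you are right that justifying that reduction uniformly over all admissible POVMs is the only delicate step if one insists on reproducing it). Your sharper constant is not merely cosmetic: it would replace the qubit-count lower bound $\log_2(|\X|)/2$ discussed after the corollary by $\log_2(|\X|)$, which matches the numerical observation in Figure~\ref{fig:qubits} and removes the ``multiplicative gap of two'' that the paper attributes to looseness of this corollary.
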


\begin{proof}
    The proof follows from that $\mathcal{Q}(X\rightarrow A)_{\rho}\leq \min\{\log_2(|\X|),\log_2(\dim(\H)^2)\}$~\cite[Proposition~2]{Farokhi_PRA}.
\end{proof}

Note that, in Corollary~\ref{cor:stats_inference}, $\dim(\H)$ captures the dimension of the quantum system used for statistical inference. In fact, the quantum system will be composed of $\log_2(\dim(\H))$ qubits. If $2\log_2(\dim(\H))<\log_2(|\X|)$, the upper bound in Corollary~\ref{cor:stats_inference} is unnecessarily reduced by the dimension of the quantum system. This points to that the minimum number of qubits required for accurately solving an inference problem must be \textit{above} $\log_2(\dim(\H))\approx \log_2(|\X|)/2$. Note that we are not asserting that $\log_2(|\X|)/2$ is the optimal number of required qubits, but that this is a lower bound for how many qubits needed to solve an inference problem effectively. Furthermore, the only thing that matters here is the size of the support set of the input $X$ (not its distribution, not the output $Z$, not the quantum computing method used, and not the classical post-processing procedure implemented). Therefore, this results is rather universal and of value to any statistical inference problem.

The upper bound in Theorem~\ref{tho:stat_inference}, which is a function of the maximal quantum leakage $\mathcal{Q}(X\rightarrow A)_{\rho}$, only depends on the quantum encoding of the classical data $\Rho$. %It is independent of the quantum computing procedure and the classical post-processing routine. The maximal quantum leakage $\mathcal{Q}(X\rightarrow A)_{\rho}$ is also independent of the output of the statistical inference problem and the distribution of all the underlying random variables. 
This hints that we can find a good universal encoding mechanism by maximizing $\mathcal{Q}(X\rightarrow A)_{\rho}$. This encoder can unlock the barrier in achieving a high accuracy in the statistical inference by increasing the upper bound in Theorem~\ref{tho:stat_inference}. This is pursued in what follows.

\textit{Optimal Quantum Encoding}:
 To compute the optimal universal encoder, we need to solve the following optimization problem:
\begin{align}\label{eqn:max_encoding}
    \argmax_{\substack{\rho^x\in\S(\H),\\ 
    \forall x\in\X}}\left[\sup_{\{F_y\}_{y\in\Y}} \log\left(\sum_{y\in\Y} \max_{\substack{
         x\in\X: \\
         \mathbb{P}\{X=x\}>0}
    } \trace(\rho^x F_y) \right)\right].
\end{align}
In the next proposition, we prove that this optimization problem attains its maximum over pure states, i.e., $\{\rho^x\}_{x\in\X}$ such that $\rank(\rho^x)=1$ for all $x\in\X$. This is an important revelation as often quantum computing procedures rely on pure states. Furthermore, this result is universal; it does not rely on the form of the statistical inference problem.

\begin{proposition} \label{prop:maximal}
    The maximum of~\eqref{eqn:max_encoding} is attained by pure states. 
\end{proposition}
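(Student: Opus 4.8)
The plan is to observe that, because $\log$ is strictly increasing, maximizing the objective in~\eqref{eqn:max_encoding} is equivalent to maximizing
\begin{align*}
    L(\Rho):=\sup_{\{F_y\}_{y\in\Y}}\sum_{y\in\Y}\max_{\substack{x\in\X:\\ \mathbb{P}\{X=x\}>0}}\trace(\rho^x F_y)
\end{align*}
over the feasible set $\S(\H)^{|\X|}$ of all tuples $\Rho=(\rho^x)_{x\in\X}$ of density operators. I would then exploit convex geometry: the feasible set is compact and convex, the map $\Rho\mapsto L(\Rho)$ is convex, and a convex function attains its maximum over a compact convex set at an extreme point. Since the extreme points of $\S(\H)$ are exactly the rank-one projectors, the extreme points of the product $\S(\H)^{|\X|}$ are exactly the tuples of pure states, and the claim would follow.

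The first step is convexity of $L$. For any \emph{fixed} POVM $\{F_y\}_{y\in\Y}$, each map $\Rho\mapsto\trace(\rho^x F_y)$ is affine, so $\Rho\mapsto\max_x\trace(\rho^x F_y)$ is a pointwise maximum of affine functions, hence convex, and summing over $y\in\Y$ preserves convexity. Thus $L$ is a pointwise supremum of convex functions over all POVMs, and therefore convex on $\S(\H)^{|\X|}$.

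The second step, which I expect to be the main technical obstacle, is to guarantee that the maximum is attained and that the extreme-point principle genuinely applies; this needs upper semicontinuity of $L$, since a pointwise supremum is automatically only \emph{lower} semicontinuous. I would establish that $L$ is in fact Lipschitz, hence continuous. For fixed $\{F_y\}$, splitting $\rho^x-\sigma^x$ into positive and negative parts gives $\trace((\rho^x-\sigma^x)F_y)\leq\trace(|\rho^x-\sigma^x|F_y)$, and combining this with $\max_x a_x-\max_x b_x\leq\max_x(a_x-b_x)$ and $\sum_{y} F_y=I$ yields the uniform-in-POVM estimate $\big|\sum_y\max_x\trace(\rho^x F_y)-\sum_y\max_x\trace(\sigma^x F_y)\big|\leq\sum_x\|\rho^x-\sigma^x\|_1$, which passes to the supremum and shows $|L(\Rho)-L(\Sigma)|\leq\sum_x\|\rho^x-\sigma^x\|_1$. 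Compactness of $\S(\H)^{|\X|}$ (a closed, bounded subset of the finite-dimensional real space of Hermitian-operator tuples) then guarantees that a maximizer exists.

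Finally, I would invoke the Bauer maximum principle: a convex, upper semicontinuous function on a compact convex set attains its maximum at an extreme point. Together with the identification of the extreme points of $\S(\H)^{|\X|}$ as coordinatewise pure states, this shows that~\eqref{eqn:max_encoding} is maximized by an encoding composed of pure states. The delicate points to get right are the uniform-in-POVM continuity estimate and the verification that the extreme points of the product set are precisely the tuples of rank-one projectors; the convexity itself is immediate. Notably, this approach establishes \emph{existence} of an optimal pure-state encoding without attempting to construct it from the spectral decomposition of a given mixed encoding, which would run into the difficulty that the optimal pure component of each $\rho^x$ may depend on the measurement outcome $y$.
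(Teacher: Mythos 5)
Your proof follows essentially the same route as the paper's: strip the logarithm, show the resulting objective is convex as a pointwise supremum of convex (indeed, sums of maxima of affine) functions, invoke Bauer's maximum principle, and identify the extreme points of $\S(\H)^{|\X|}$ with tuples of pure states. The only difference is that you also verify the upper semicontinuity hypothesis of Bauer's theorem via a uniform Lipschitz estimate, a point the paper's proof leaves implicit; this is a welcome extra bit of rigor but not a different argument.
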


\begin{proof}
    First, because $\log_2(\cdot)$ is strictly increasing, the optimization problem~\eqref{eqn:max_encoding} is equivalent to
    \begin{align*}%\label{eqn:max_encoding_without_log}
        \argmax_{\substack{\rho^x\in\S(\H),\\ 
    \forall x\in\X}}\; g(\{\rho^x\}_{x\in\X}),
    \end{align*}
    where
    \begin{align*}
        g(\{\rho^x\}_{x\in\X}):=\sup_{\{F_y\}_{y\in\Y}} \left(\sum_{y\in\Y} \max_{\substack{
         x\in\X: \\
         \mathbb{P}\{X=x\}>0}
    } \trace(\rho^x F_y)\right).
    \end{align*}
    It is easy to see that $g:\S(\H)^{|\X|}\rightarrow \mathbb{R}$ is convex because
    \begin{align*}
        g(\{\alpha &\rho^x+(1-\alpha)\sigma^x\}_{x\in\X})
        \\&= \sup_{\{F_y\}_{y\in\Y}}\sum_{y\in\Y} \max_{
         x\in\X} \trace((\alpha\rho^x+(1-\alpha)\sigma^x) F_y) 
         \\&\leq \alpha\sup_{\{F_y\}_{y\in\Y}}\sum_{y\in\Y} \max_{
         x\in\X} \trace(\rho^x F_y)\\
         &\quad+(1-\alpha)\sup_{\{F_y\}_{y\in\Y}}\sum_{y\in\Y} \max_{
         x\in\X} \trace(\sigma^x F_y) 
          \\&\leq \alpha g(\{\rho^x\}_{x\in\X})
         +(1-\alpha)g(\{\sigma^x\}_{x\in\X}).
    \end{align*}
    According to the Bauer's maximum principle~\cite[Theorem~3.5.29]{denkowski2003introduction}, originally proved in~\cite{bauer1958minimalstellen}, %the maximum~\cite{eqn:max_encoding_without_log}
    $g(\cdot)$ must attain its maximum at an extreme point of $\S(\H)^{|\X|}$. The extreme points of this set are pure states~\cite[Theorem~2.3]{Holevo2013}. 
\end{proof}

Note that Proposition~\ref{prop:maximal} does not claim that the solution is unique. The problem might admit many solutions but at least one of those solutions uses  pure states for encoding the classical data. Note that all the optimal solutions have the same maximal quantum leakage.

\begin{proposition} \label{prop:index}
    If $\dim(\H)\geq |\X|$, the maximum of~\eqref{eqn:max_encoding} is attained by basis encoding, i.e., $\rho^x=\ket{\tau(x)} \bra{\tau(x)}$, where $\{\ket{i}\}_{i=1,\dots,\dim(\H)}$ is an orthonormal basis for $\H$ and $\tau:\X\rightarrow \{1,\dots,|\X|\}$ is any one-to-one mapping. 
\end{proposition}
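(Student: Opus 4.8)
The plan is to prove a matching pair of bounds: a universal ceiling on the objective that holds for \emph{every} encoding, together with the observation that basis encoding hits this ceiling exactly whenever $\dim(\H)\ge|\X|$. As in the proof of Proposition~\ref{prop:maximal}, monotonicity of $\log$ lets me drop the logarithm and work with $g(\{\rho^x\}_{x\in\X})=\sup_{\{F_y\}_{y\in\Y}}\sum_{y}\max_{x}\trace(\rho^x F_y)$, where (here and below) the maximum runs over the $x$ with $\mathbb{P}\{X=x\}>0$.

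The first and main step is the universal upper bound. Since each $\trace(\rho^x F_y)\ge 0$ (both operators are positive semi-definite), the maximum over $x$ is dominated by the sum over $x$, so for any POVM
\begin{align*}
\sum_{y}\max_{x}\trace(\rho^x F_y)
\le \sum_{y}\sum_{x}\trace(\rho^x F_y)
= \sum_{x}\trace\!\Big(\rho^x\sum_{y}F_y\Big)
= \sum_{x}\trace(\rho^x)=|\X|,
\end{align*}
using the completeness relation $\sum_{y}F_y=I$ and $\trace(\rho^x)=1$. Taking the supremum over POVMs gives $g(\{\rho^x\}_{x\in\X})\le|\X|$ for every encoding, independent of the Hilbert-space dimension; this is exactly the saturation value of the $\log(|\X|)$ term appearing in Corollary~\ref{cor:stats_inference}.

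For achievability I would invoke the hypothesis $\dim(\H)\ge|\X|$, which is precisely what permits $|\X|$ mutually orthogonal pure states. With $\rho^x=\ket{\tau(x)}\bra{\tau(x)}$ and the projective POVM $F_i=\ket{i}\bra{i}$ for $i=1,\dots,\dim(\H)$, orthonormality gives $\trace(\rho^x F_i)=|\langle\tau(x)|i\rangle|^2=\delta_{\tau(x),i}$, so $\max_x\trace(\rho^x F_i)$ equals $1$ when $i$ lies in the range of $\tau$ and $0$ otherwise. Summing over the $|\X|$ indices in that range yields $\sum_i\max_x\trace(\rho^x F_i)=|\X|$, hence $g\ge|\X|$ for basis encoding; combined with the ceiling, $g=|\X|$. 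Because the ceiling is identical for all encodings, basis encoding attains the maximum of~\eqref{eqn:max_encoding}, proving the claim.

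I do not expect a genuine obstacle here; the argument is short once the right bound is spotted, and the whole content is the clean inequality $\max_x\le\sum_x$ followed by a telescoping through completeness. The only points requiring care are that this step relies on nonnegativity of the Born probabilities, and that the extra outcomes $F_i$ with $i>|\X|$ (present when $\dim(\H)>|\X|$) are harmless, since they contribute zero to the sum while keeping $\{F_i\}$ a valid POVM. It is also worth remarking that this exhibits a concrete optimal pure-state encoder, consistent with Proposition~\ref{prop:maximal}.
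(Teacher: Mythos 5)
Your proof is correct and follows essentially the same route as the paper: a universal ceiling of $|\X|$ on the pre-logarithm objective, matched exactly by basis encoding together with a projective measurement. The only differences are cosmetic --- you re-derive the ceiling via $\max_x\le\sum_x$ plus completeness where the paper simply cites it as \cite[Proposition~2]{Farokhi_PRA}, and your measurement $\{\ket{i}\bra{i}\}_{i=1,\dots,\dim(\H)}$ is genuinely complete when $\dim(\H)>|\X|$, whereas the paper's choice $\Y=\X$, $F_y=\rho^y$ strictly needs an extra element $I-\sum_{y}\rho^y$ to form a valid POVM in that case.
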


\begin{proof}
    Note that $\mathcal{Q}(X\rightarrow A)\leq \log_2(|\X|)$ irrespective of $\{\rho^x\}_{x\in\X}$~\cite[Proposition~2]{Farokhi_PRA}. Let $\rho^x=\ket{\tau(x)} \bra{\tau(x)}$ for all $x\in\X$. Fix $\Y=\X$ and $F_y=\rho^y$ for all $y\in\Y$. We get $\sum_{y\in\Y} \max_{{
         x\in\X: 
         \mathbb{P}\{X=x\}>0}
    } \trace(\rho^x F_y)=|\X|$, which attains $\mathcal{Q}(X\rightarrow A)= \log_2(|\X|)$.
\end{proof}

Basis encoding, also called index encoding, is one of the most common forms of quantum encoding~\cite{10555535110653511068}. Proposition~\ref{prop:index} shows that this popular encoding policy is in fact universally optimal when there are enough qubits present. 

\begin{figure}
    \centering
    \includegraphics[width=0.95\columnwidth]{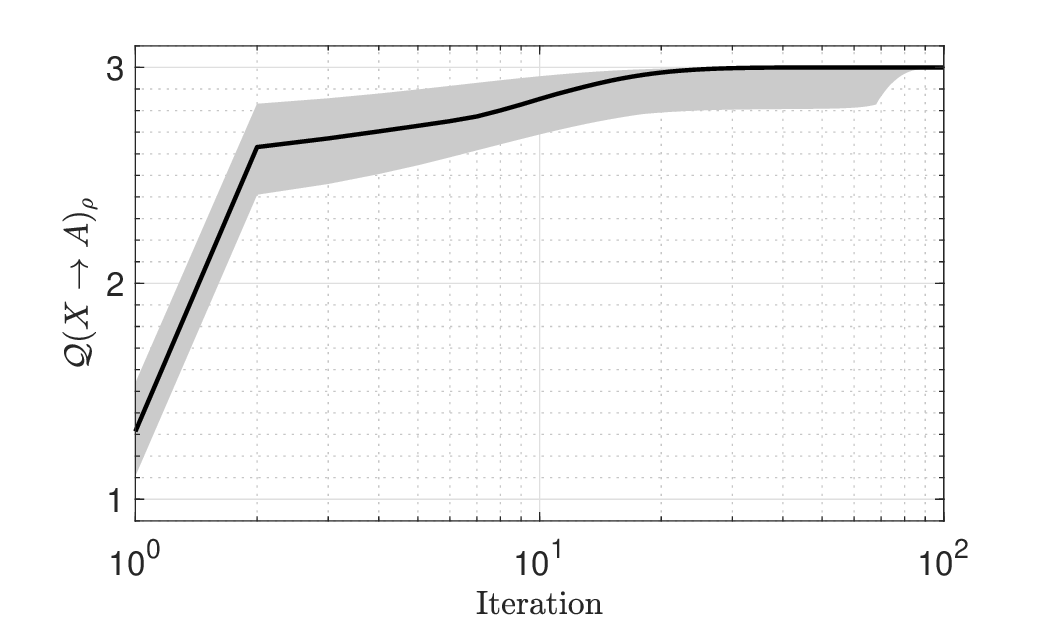}
    \caption{The maximal quantum leakage versus the number of iterations while implementing the projected subgradient ascent in~\eqref{eqn:gradient_ascent} to find the best quantum encoding of the classical data.}
    \label{fig:best_encoding}
\end{figure}

\begin{figure}
    \centering
    \includegraphics[width=0.95\columnwidth]{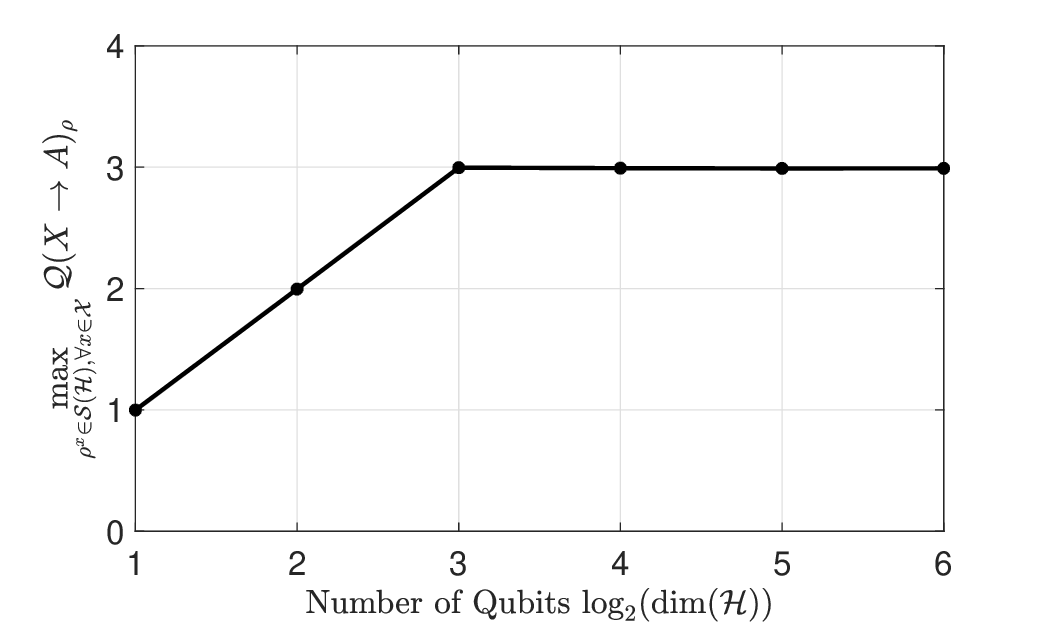}
    \caption{The Maximal quantum leakage 
    for the optimal universal encoder versus number of the qubits. 
    }
    \label{fig:qubits}
\end{figure}

\textit{Iterative Algorithm for Maximizing the Maximal Quantum Leakage}:
We make use of the iterative algorithm proposed in~\cite{Farokhi_PRA} to compute the maximal quantum leakage. Then, we update the quantum encoding via projected subgradient\footnote{Subgradients are generalizations of gradient to convex functions which are not necessarily differentiable.} ascent. We encourage interested readers to check~\cite[\chap\,14.2-14.3]{sun2006optimization} for more information on subgradients and non-smooth optimization. In~\cite{Farokhi_PRA}, it was shown that
\begin{subequations}\label{eqn:update_problem}
    \begin{align}
    2^{\mathcal{Q}(X\rightarrow A)_{\rho}}=\sup_{\{F_y\}} &\sum_{y\in\mathbb{Y}}  \trace(\rho^{x^*(y)} F_y),\\
   \mathrm{s.t.}\; & 0\preceq F_y,y\in\Y, \\
   &\sum_{y\in\Y} F_y=I,
\end{align}
\end{subequations}
where $\Y=\{1,\dots,\dim(\H)^2\}$ and 
\begin{align*}
    x^*(y)\in \argmax_{x\in\X} \trace(\rho_A^x F_y).
\end{align*}
Therefore, the subgradient with respect to $\rho^x$, $x\in\X$, is given by
\begin{align*}
    \partial_{\rho^x} 2^{\mathcal{Q}(X\rightarrow A)_{\rho}}=\sum_{
    \substack{
    y\in\Y:\\
    x^*(y)=x
    }} F_y^*,
\end{align*}
where $\partial_{\rho^x}$ denotes the subgradient with respect to $\rho^x$ and $\{F_y^*\}$ denotes the POVM that attains the maximum in~\eqref{eqn:update_problem}, which can be computed using the iterative algorithm in~\cite{Farokhi_PRA}. We can therefore update the quantum encoding of the data by using the projected subgradient ascent:
\begin{align}\label{eqn:gradient_ascent}
    \rho^x\gets \Pi[\rho^x+\mu\partial_{\rho^x} 2^{\mathcal{Q}(X\rightarrow A)_{\rho}}], 
\end{align}
where $\mu>0$ is the step size (selected small enough to ensure convergence) and $\Pi$ is projection to the set of rank-one matrices with unit trace. For any Hermitian operator $\sigma\in\L(\H)$, the projection $\Pi$ is defined as $\Pi[\sigma]=\ket{i^*}\bra{i^*},$ where $i^*\in\argmax_{1\leq i\leq \dim(\H)} |\lambda_i|$ with $\sigma=\sum_{i}\lambda_i \ket{i}\bra{i}$. Note that $\sigma$ can always be diagonalized as above because it is Hermitian. 

In what follows, we demonstrate the convergence of the subgradient ascent in the case where $\X:=\{1,\dots,8\}$ and $\dim(\H)=8$. In this case, Proposition~\ref{prop:index} shows that the basis encoding is the optimal universal encoding strategy. Therefore, for optimal universal encoding strategy, $\mathcal{Q}(X\rightarrow A)_{\rho}=3$. Figure~\ref{fig:best_encoding} illustrates the maximal quantum leakage versus the number of iterations while implementing the projected subgradient ascent in~\eqref{eqn:gradient_ascent} to find the best quantum encoding. The algorithm is initialized at a random encoding strategy. The gray area demonstrates the maximum and minimum in each iteration over 100 runs of the algorithm (note the randomness in the initialization) and the solid black line shows the median in each iteration. As we can see, the algorithm rapidly converges to the optimal quantum encoding. Note that the optimal encoder is not unique but $\mathcal{Q}(X\rightarrow A)_{\rho}=3$ irrespectively.

Figure~\ref{fig:qubits} shows the maximal quantum leakage for the optimal universal encoder versus the number of the qubits for the case where $\X=\{1,\dots,8\}$. Clearly, the maximal quantum leakage increases until the number of the qubits hits $\log_2(|\X|)=3$. Based on Corollary~\ref{cor:stats_inference}, we expected that the minimum number of required qubits must be above $\log_2(|\X|)/2=1.5$. This shows that there is a multiplicative gap of two between the lower bound for the number of qubits, discussed earlier, and the exact number of qubits needed. This can be attributed to the looseness of the bound in Corollary~\ref{cor:stats_inference} for pure states. Pursing a tighter bound is a viable direction for future research. Irrespective of this, the observation from Corollary~\ref{cor:stats_inference} seems to be optimal up to a constant scaling factor.

\textit{Discussions}: We presented a framework for developing optimal universal strategies for quantum encoding of classical data. The framework is universal as it is optimal for a wide array of statistical inference tasks; it does not rely on the output of the statistical inference problem and the underlying probability distributions. The optimal encoding strategy only takes into account the support set of the input variable. We proved that the optimal universal encoding strategy is attained by pure states. Furthermore, when there are enough qubits, basis encoding was proved to be universally optimal. For all other cases, an iterative method for numerically computing the optimal universal encoding strategy was presented. Future work can focus on using maximal quantum leakage for bounding the accuracy of inference models in the small-data regime and in quantum machine learning. 

% \begin{acknowledgments}
% \end{acknowledgments}

\bibliography{apssamp}

\end{document}